\theoremstyle{plain}
\newtheorem{theorem}{Theorem}
\newtheorem{proposition}{Proposition}
\theoremstyle{definition}
\newtheorem{definition}{Definition}
\theoremstyle{remark}
\newtheorem*{rem}{Remark}
\newcommand{\comment}[1]{}
\newcommand{\myparagraph}[1]{\smallskip\noindent\textbf{#1.}  }
\newcommand{\mc}[1]{\mathcal{#1}}
\newcommand{\true}{\ensuremath{\mathit{true}}}
\newcommand{\false}{\ensuremath{\mathit{false}}}
\newcommand{\acte}{\ensuremath{\Sigma_\exists}}
\newcommand{\Op}{\ensuremath{\mathit{Op}}}
\newcommand{\Gr}{\ensuremath{\mathit{Gr}}}
\newcommand{\nop}{\ensuremath{\mathit{nop}}}
\newcommand{\obs}{\ensuremath{\mathit{obs}}}
\newcommand{\enabled}{\ensuremath{\mathsf{Enabled}}}
\newcommand{\playerf}{\ensuremath{\mathit{Player}_\forall}}
\newcommand{\playere}{\ensuremath{\mathit{Player}_\exists}}
\newcommand{\transg}{\ensuremath{\rightarrow_g}}
\newcommand{\transgw}{\ensuremath{\Rightarrow_g}}
\newcommand{\actebot}{\ensuremath{\Sigma_\exists^\bot}}
\newcommand{\last}{\ensuremath{\mathsf{last}}}
\newcommand{\prefs}{\ensuremath{\mathsf{Prefs}}}
\newcommand{\prefse}{\ensuremath{\mathsf{Prefs}_\exists}}
\newcommand{\stre}{\ensuremath{f_\exists}}
\newcommand{\outcome}{\ensuremath\mathsf{Outcome}}
\newcommand{\safety}{\ensuremath\mathsf{Safety}}
\newcommand{\reach}{\ensuremath\mathsf{Reach}}
\newcommand{\err}{\ensuremath Err}
\newcommand{\goal}{\ensuremath Goal}
\newcommand{\Obs}{\ensuremath \mathit{Obs}}
\newcommand{\Act}{\ensuremath \mathit{Act}}
\newcommand{\pr}{\ensuremath \mathit{pr}}
\newcommand{\wpr}{\ensuremath \mathit{wpr}}
\newcommand{\parity}{\ensuremath \mathit{Parity}}
\newcommand{\wparity}{\ensuremath \mathit{WeakParity}}
\newcommand{\InfObs}{\ensuremath \mathit{InfObs}}
\newcommand{\States}{\ensuremath \mathit{States}}
\newcommand{\pfin}{\ensuremath\mc P_\mathsf{fin}} 
\newcommand{\ucsets}{\ensuremath\mc{U}_\obs}
\newcommand{\dcsets}{\ensuremath\mc{D}_\obs}
\newcommand{\dcsetsf}{\ensuremath\mc{D}_\obs^{\mathsf{fin}}}
\newcommand{\pre}{\ensuremath\mathsf{Pre}}
\newcommand{\post}{\ensuremath\mathsf{Post}}
\newcommand{\lsets}{\ensuremath\mathcal{L}}
\newcommand{\Min}{\ensuremath\mathsf{Min}}
\newcommand{\games}{\ensuremath\widetilde{\mc G}}
\newcommand{\idle}{\ensuremath \mathit{idle}}
\newcommand{\transgs}{\ensuremath{\widetilde\rightarrow_g}}
\newcommand{\win}{\ensuremath{\mathit{win}}}
\newcommand{\Children}{\ensuremath{\mathit{Children}}}
\title{Lossy Channel Games under Incomplete Information}
\author{Rayna Dimitrova
\institute{Saarland University, Germany}
\email{dimitrova@cs.uni-saarland.de}
\and
Bernd Finkbeiner
\institute{Saarland University, Germany}
\email{finkbeiner@cs.uni-saarland.de}
}
\begin{document}
\sloppy 

\maketitle

\begin{abstract}
In this paper we investigate lossy channel games under incomplete information, where two players operate on a finite set of unbounded FIFO channels and one player, representing a system component under consideration operates under incomplete information, while the other player, representing the component's environment is allowed to lose messages from the channels. We argue that these games are a suitable model for synthesis of communication protocols where processes communicate over unreliable channels. We show that in the case of finite message alphabets, games with safety and reachability winning conditions are decidable and finite-state observation-based strategies for the component can be effectively computed. Undecidability for (weak) parity objectives follows from the undecidability of (weak) parity perfect information games where only one player can lose messages.
\end{abstract}

\section{Introduction}\label{Sec:intro}
Lossy channel systems (LCSs), which are finite systems communicating via unbounded lossy FIFO channels, are used to model communication protocols such as link protocols, a canonical example of which is the Alternating Bit Protocol. The decidability of verification problems for LCSs has been well studied and a large number of works have been devoted to developing automatic analysis techniques. In the control and synthesis setting, where games are the natural computational model, this class of systems has not yet been so well investigated. In~\cite{AbdullaBd/08/MonotonicAndDownwardClosedGames}, Abdulla et al. establish decidability of two-player safety and reachability games where one (or both) player has downward-closed behavior (e.g., can lose messages), which subsumes games with lossy channels where one player (i.e., the environment) can lose messages. They, however, assume that the game is played under perfect information, which assumption disregards the fact that a process has no access to the local states of other processes or that it has only limited information about the contents of the channels. To the best of our knowledge, games under incomplete information where the players operate on unbounded unreliable channels have not been studied so far. 

We define \emph{lossy channel games under incomplete information} and show that in the case of finite message alphabets, games  with safety and reachability winning conditions are decidable and finite-state observation-based strategies for the player who has incomplete information can be effectively computed.

Algorithms for games under incomplete information carrying out an explicit knowledge based subset construction~\cite{Reif/84/TheComplexityOfTwoPlayerGamesOfIncompleteInformation} are not directly applicable to infinite-state games. Symbolic approaches~\cite{ChatterjeeDHR/06/AlgorithmsForOmegaRegularGamesOfIncompleteInformation} are effective for restricted classes of infinite-state games like discrete games on rectangular automata~\cite{DeWulfDR/06/ALatticeTheoryForSolvingGamesOfImperfectInformation}.
The symbolic algorithms that we present in this paper rely on the monotonicity of lossy channel systems w.r.t. the subword ordering, which is a well-quasi ordering (WQO). It is well known that upward and downward-closed sets of words used in the analysis of lossy channel systems can be effectively represented by finite sets of minimal elements and simple regular expressions~\cite{AbdullaCBJ/04/UsingForwardReachabilityAnalysisForVerificationOfLossyChannelSystems}, respectively.
Unsurprisingly, the procedures for solving lossy channel games under incomplete information that we develop manipulate sets of sets of states. Thus, our termination arguments rely on the fact that the subword ordering is in fact a better-quasi ordering (BQO)~\cite{Milner/85/BasicWqoAndBqoTheory,Nash-Williams/65/OnWellQuasiOrderingInfiniteTrees}, a stronger notion than WQO that is preserved by the powerset operation~\cite{Marcone/01/FineAnalysisOfTheQuasiOrderingsOnThePowerSet}.

\section{Lossy Channel Games under Incomplete Information}\label{Sec:prelim}
Lossy channel systems are asynchronous distributed systems composed of finitely many finite-state processes communicating through a finite set of unbounded FIFO channels that can nondeterministically lose messages. We consider \emph{partially specified lossy channel systems}, where the term partially specified refers to the fact that we consider a second ("friendly") type of nondeterminism, in addition to the ("hostile") one due to the model. More specifically, this second type of nondeterminism models \emph{unresolved implementation decisions} that can be resolved in a favorable way. We consider the case when these decisions are within a single process, and thus we can w.l.o.g.\ assume that the system consist of only two processes: the process under consideration and the parallel composition of the remaining processes.

\begin{definition}
A \emph{partially specified lossy channel system (LCS)} is a tuple $\mc L = (\mc A_0,\mc A_1, C, M, \Sigma_0,\Sigma_1,\Sigma_\exists)$, where for each \emph{process identifier} $p \in \{0,1\}$, $\mc A_p$ is a finite automaton describing the behavior of process $p$, $C$ is a finite set of \emph{channels}, $M$ is a finite set of \emph{messages}, $\Sigma = \Sigma_0 \dot\cup \Sigma_1$ is the union of the disjoint finite sets of \emph{transition labels} for the two processes, and $\Sigma_\exists \subseteq \Sigma_0$ is a subset of the labels of the \emph{partially specified process} $\mc A_0$. The automaton $\mc A_p = (Q_p,q_p^0,\delta_p)$ for a process $p$ consists of a finite set $Q_p$ of \emph{control locations}, an \emph{initial location} $q_p^0$ and a finite set $\delta$ of \emph{transitions} of the form $(q,a,\Gr,\Op,q')$, where $q,q'\in Q_p$,  $a \in \Sigma_p$, $\Gr: C \to \{\true,(=\epsilon),\in (m\cdot M^*)\mid m \in M\}$ and $Op: C \to \{!m,?m,\nop\mid m \in M\}$. 
Intuitively, the function $\Gr$ maps each channel to a guard, which can be an emptiness test, a test of the letter at the head of the channel or the trivial guard \emph{true}. The function $\Op$ gives the update operation for the respective channel, which is either a write, a read or $\nop$, which leaves the channel unchanged. 
\end{definition}

\myparagraph{Example}
Fig.\ref{fig:abp_partial} depicts a partially specified protocol consisting of two processes, {\sc Sender} and {\sc Receiver}, communicating over the unreliable channels $K$ and $L$. Process {\sc Sender} sends messages to {\sc Receiver} over channel $K$ and {\sc Receiver} acknowledges the receipt of a message using channel $L$. Note that we use guards that test channels for emptiness or test the first letter of their contents.
\definecolor{gray}{rgb}{0.5 0.5 0.5}
\definecolor{lightgray}{rgb}{0.90 0.90 0.90}

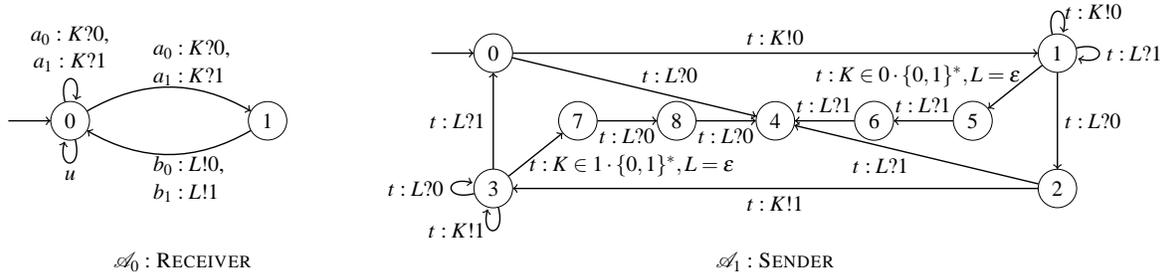
\begin{figure}
\centering
\scalebox{0.75}{
\begin{tikzpicture}

\tikzstyle{state} = [circle,draw=black,
		     minimum size = 1.5em,line width = .2pt]

\node (r0) [state] {$0$};
\node (r1) [state,right of = r0,xshift=2.5cm] {$1$};

\node (s4) [state,right of = r1,xshift=8cm] {$4$};
\node (s0) at ($(s4.center) + (-5cm,1.2cm)$)  [state,anchor=center] {$0$};
\node (s1) at ($(s4.center) + (5cm,1.2cm)$)   [state,anchor=center] {$1$};
\node (s2) at ($(s4.center) + (5cm,-1.2cm)$) [state,anchor=center] {$2$};
\node (s3) at ($(s4.center) + (-5cm,-1.2cm)$)  [state,anchor=center] {$3$};
\node (s11) at ($(s4.center) + (3.5cm,0cm)$) [state,anchor=center] {$5$};
\node (s12) at ($(s4.center) + (1.75cm,0cm)$) [state,anchor=center] {$6$};
\node (s31) at ($(s4.center) + (-3.5cm,-0cm)$) [state,anchor=center] {$7$};
\node (s32) at ($(s4.center) + (-1.75cm,-0cm)$) [state,anchor=center] {$8$};

\node (receiver) at ($(r0.center) + (2cm,-2.5cm)$) [anchor=center] {$\mc A_0:$ \sc Receiver};
\node (sender) at ($(s4.center) + (0,-2.5cm)$) [anchor=center] {$\mc A_1:$ \sc Sender};

\draw[->,draw = black,line width = .7pt] ($(r0.west) + (-.75,0)$) to  (r0.west);
\path[->,draw = black,line width = .7pt] 
(r0) edge [loop above] node[above,text width = 2cm, text centered]{$a_0:K?0$,\\$a_1:K?1$} (r0)
(r0) edge [loop below] node[below,text width = 2cm, text centered]{$u$} (r0)
(r0) edge [bend left] node[above,text width = 2cm,near end]{$a_0:K?0$,\\$a_1:K?1$} (r1) 
(r1) edge [bend left] node[below,text width = 2cm,near start]{$b_0:L!0$,\\$b_1:L!1$} (r0);

\draw[->,draw = black,line width = .7pt] ($(s0.west) + (-.75,0)$) to  (s0.west);
\path[->,draw = black,line width = .7pt] 
 (s0) edge node[above]{$t:K!0$} (s1)
 (s1) edge node[right]{$t:L?0$} (s2)
 (s2) edge node[below]{$t:K!1$} (s3)
 (s3) edge node[left]{$t:L?1$} (s0)
 (s1) edge [loop above] node[right] {$t:K!0$} (s1)
 (s1) edge [loop right] node[right] {$t:L?1$} (s1)
 (s3) edge [loop below] node[left] {$t:K!1$} (s3)
 (s3) edge [loop left]  node[left]  {$t:L?0$} (s0)
 (s0) edge node[right, yshift=.2cm]  {$t:L?0$} (s4)
 (s2) edge node[left, yshift=-.2cm]  {$t:L?1$} (s4)
 (s1) edge node[left,near start]  {$t:K\in0\cdot\{0,1\}^*,L=\epsilon$} (s11)
 (s11) edge node[above]  {$t:L?1$} (s12)
 (s12) edge node[above]  {$t:L?1$} (s4)
 (s3) edge node[right,near start]  {$t:K\in1\cdot\{0,1\}^*,L=\epsilon$} (s31)
 (s31) edge node[below]  {$t:L?0$} (s32)
 (s32) edge node[below]  {$t:L?0$} (s4)
;
 
\end{tikzpicture}
}
\caption{A communication protocol with partially specified {\sc Receiver} process. For process {\sc Receiver} we have $\Sigma_0 = \{a_0,a_1,b_0,b_1,u\}$ and $\Sigma_\exists = \{b_0,b_1\}$. The property that the implementation must satisfy is that location $4$ in {\sc Sender} is not reachable, i.e., the receiver does not acknowledge messages that have not been sent, and once all messages and acknowledgements from previous phases have been consumed, the receiver can only send one delayed acknowledgement. Note that by using an extra channel and an extra location in process {\sc Receiver} we can ensure that the error location is in process {\sc Receiver}.} 
\label{fig:abp_partial}
\end{figure}

The two processes are represented as nondeterministic finite-state automata. Process {\sc Sender} essentially runs the Alternating Bit Protocol. Process {\sc Receiver}, however, is only \emph{partially specified}: its alphabet of transition labels $\Sigma_0 = \{a_0,a_1,b_0,b_1,u\}$ is partitioned according to the unresolved decisions in the process specification: The subset $\Sigma_\exists = \{b_0,b_1\}$  of controllable transition labels specifies the unresolved implementation decisions, namely what bit to be sent on channel $L$ at location $1$. 

The property that the protocol must satisfy is encoded as the unreachability of location $4$ in process {\sc Sender}. However, the automata can easily be augmented (with an extra channel and an error location in process {\sc Receiver}) in a way that the error location is in process {\sc Receiver}. The property states that:
\begin{enumerate}\setlength{\itemsep}{-.5mm}
 \item the receiver does not acknowledge messages that have not been sent, that is in location $2$ in {\sc Sender} the language of $L$ is $0^*$ and in location $0$ in {\sc Sender} the language of $L$ is $1^*$,
 \item once all messages and acknowledgements trailing from previous phases have been consumed (or lost), the number of delayed acknowledgements the receiver can send is bounded by one.
\end{enumerate}

\bigskip 

A \emph{configuration} $\gamma = (q_0,q_1,w)$ of $\mc L$ is a tuple of the locations of the two processes and a function $w : C \to M^*$ that maps each channel to its contents. The \emph{initial configuration} of $\mc L$ is $\gamma^0 = (q_0^0,q_1^0,\epsilon)$, where $\epsilon(c) = \epsilon$ for each $c \in C$. The set of possible channel valuations is $W = \{w\mid w: C \to M^*\}$. 

The \emph{strong labeled transition relation} $\rightarrow \subseteq (Q_0 \times Q_1 \times W) \times \Sigma \times (Q_0 \times Q_1 \times W)$ of $\mc L$ consists of all tuples $((q_0,q_1,w),a,(q_0',q_1',w'))$ (denoted $(q_0,q_1,w)\stackrel{a}{\rightarrow}(q_0',q_1',w')$) such that if $a \in \Sigma_p$, then $q_{1-p}' = q_{1-p}$ and there is a transition $(q_p,a,\Gr,\Op,q_p') \in \delta$ such that for each $c \in C$ all of the following conditions hold: (1) if $\Gr(c) = (\in m \cdot M^*)$ then $w(c) \in m \cdot M^*$, (2) if $\Gr(c) = (=\epsilon)$ then $w(c) = \epsilon$, (3) if $\Op(c) = !m$, then $w'(c) = w(c) \cdot m$, (4) if $\Op(c) = ?m$, then $m \cdot w'(c) = w(c)$, and (5) if $\Op(c) = \nop$, then $w'(c) = w(c)$.  

Let $\preceq$ denote the (not necessarily contiguous) subword relation on $M^*$ and let us define its extension to elements of $W$ as follows: $w_1 \preceq w_2$ for $w_1,w_2 \in W$ iff $w_1(c) \preceq w_2(c)$ for every $c \in C$. 

The \emph{weak labeled transition relation} $\Rightarrow \subseteq (Q_0 \times Q_1 \times W) \times \Sigma \times (Q_0 \times Q_1 \times W)$ for $\mc L$ is defined as follows: $(q_0,q_1,w) \stackrel{a}{\Rightarrow} (q_0',q_1',w')$ iff there exist $w_1$ and $w_2$ such that $w_1 \preceq w$ and $w' \preceq w_2$ and $(q_0,q_1,w_1) \stackrel{a}{\rightarrow} (q_0', q_1',w_2)$, i.e., the channels can lose messages before and after the actual transition.

\begin{definition}[LC-game structure with incomplete information]\label{def:game-structure}
Let $\mc L = (\mc A_0,\mc A_1, C, M, \Sigma_0,\Sigma_1,\Sigma_\exists)$ be a partially specified LCS, and $C_\obs \subseteq C$ be a set of \emph{observable channels} that includes the set of all channels occurring in guards or read operations in $\mc A_0$. The \emph{lossy channel game structure with incomplete information} for $\mc L$ and $C_\obs$ is $\mc G(\mc L,C_\obs) = (S,I,\transg,C,M,\Sigma_0,\Sigma_1,\acte,C_\obs)$, where:
\begin{itemize} 
 \item The set of \emph{states} of $\mc G$ is $S  = \{0,1\} \times Q_0 \times Q_1 \times W$. The first component $p$ of a state $(p,q_0,q_1,w)$ identifies the process to be executed and the remaining ones encode the current configuration of $\mc L$. 
The set of initial states of $\mc G$ is $I = \{(p,q_0,q_1,w) \mid p \in \{0,1\},\ q_0 = q_0^0,\ q_1 = q_1^0,\ w = \epsilon\}$.

\item The labeled transition relations $\transg \subseteq S \times \Sigma \times S$ and $\transgw \subseteq S \times \Sigma \times S$ of $\mc G$ are defined as follows: for states $s = (p,q_0,q_1,w)$ and $s' = (p',q_0',q_1',w')$ and $a \in \Sigma$ we have
$s \stackrel{a}{\transg} s'$ iff $a \in \Sigma_p$ and $(q_0,q_1,w) \stackrel{a}{\rightarrow} (q_0',q_1', w')$, 
and we have $s \stackrel{a}{\transgw} s'$ iff $a \in \Sigma_p$ and $(q_0,q_1,w) \stackrel{a}{\Rightarrow} (q_0',q_1', w')$.

\end{itemize}
\end{definition}

\begin{rem}
The first component of states in $S$ is used to model the interleaving semantics and is updated nondeterministically in the transition relation $\transg$ (and $\transgw$). For simplicity, in Definition~\ref{def:game-structure} we do not make any assumptions about the nondeterministic choice of which process to be executed. One natural assumption one might want to make is that the selected process must have at least one transition enabled in the current state. This and other restrictions can be easily imposed in the above model.
\end{rem}

For the rest of the paper, let $\mc G = \mc G(\mc L,C_\obs) = (S,I,\transg,C,M,\Sigma_0,\Sigma_1,\acte,C_\obs)$ be the LC-game structure with incomplete information for a partially specified LCS $\mc L$ and observable channels $C_\obs$.

$\playere$ plays the game under incomplete information, observing only certain components of the current state of the game. Let $H_\obs = C_\obs \to (M \cup \{\epsilon\})$ and $\Obs = \{0,1\}\times Q_0\times H_\obs$. The \emph{observation function} $\obs: S \to \Obs$ maps each state $s = (p,q_0,q_1,w)$ in $\mc G$ to the tuple $\obs(s) = (p,q_0,h)$ of state components observed by $\playere$, where for each $c \in C_\obs$, if $p = 1$, then $h(c) = \epsilon$ and otherwise if $w(c) = \epsilon$, then $h(c) = \epsilon$ and if $w(c) = m\cdot w'$ for some $m \in M$ and $w' \in M^*$, then $h(c) = m$. That is, when $p = 0$ we have for $c \in C_\obs$ that $h(c)$ is the letter at the head of $w(c)$, when $c$ is not empty. For $o \in \Obs$, we denote with $\States(o) = \{s \in S\mid\obs(s) = o\}$ the set of states whose observation is $o$.

Let $S_0 = \{(p,q_0,q_1,w) \in S \mid p = 0\}$ be the states where process $0$ is to be executed and $S_1 = S \setminus S_0$.

The game $\mc G$ is played by $\playere$ and $\playerf$ who build up a play $s_0a_0^\exists a_0s_1 a_1^\exists a_1\ldots$, which is sequence of alternating states in $S$, labels in $\actebot=\acte\cup\{\bot\}$ and labels in $\Sigma$, starting with a state $s_0 \in I$. Each time the current state is in $S_0$, $\playere$ has to choose a label from the set $\acte\cup\{\bot\}$, that is either a label from $\acte$ of a transition enabled in the current state, or can be the special element $\bot$ in case no transition with label in $\acte$ is enabled or if there exists an enabled transition with label from $\Sigma_0 \setminus \acte$.

Let $\enabled(s) = \{a \in \Sigma_0 \mid \exists s'.\ s \stackrel{a}{\transg} s'\}$. Note that for states $s_1,s_2 \in S_0$ with $\obs(s_1) = \obs(s_2) = o$ it holds that $\enabled(s_1) = \enabled(s_2)$, and, abusing notation, we denote this set with $\enabled(o)$.  

For an observation $o = (0,q_0,h)$, the set $\Act_\exists(o) = (\enabled(o) \cap \Sigma_\exists) \cup \{\bot \mid \enabled(o) \cap \acte = \emptyset \text{ or } \enabled(o) \cap (\Sigma_0\setminus\acte) \not= \emptyset\}$ consists of the transition labels that $\playere$ can choose in a set $s \in S_0$ with $\obs(s) = o$. For a label $a^\exists \in \actebot$, the set $\Act_\forall(o,a^\exists) = (\{a^\exists\} \cap \acte) \cup (\enabled(o)\setminus\acte)$ consists of the transition labels which $\playerf$ can choose when the current choice of $\playere$ is $a^\exists$.

The play is built by $\playerf$ respecting the choices of $\playere$ and the transition relation $\transgw$. When $s_i \in S_0$, then $a_i^\exists \in \Act_\exists(\obs(s_i))$ is the transition label chosen by $\playere$ after the play prefix $s_0 a_0^\exists a_0s_1 a_1^\exists a_1\ldots a_{i-1}^\exists a_{i-1} s_i$ and $a_i \in \Act_\forall(\obs(s_i),a_i^\exists)$. After $\playere$ has made his choice, $\playerf$ resolves the remaining nondeterminism by choosing $a_i$ and the successor state $s_{i+1}$ to extend the play.

A \emph{play} in $\mc G$ is a sequence $\pi = s_0a_0^\exists a_0s_1 a_1^\exists a_1 s_1\ldots \in (S \cdot (\actebot \cdot \Sigma \cdot S)^* \cup S \cdot (\actebot\cdot\Sigma\cdot S)^\omega)$ such that $s_0 \in I$, for every $i\geq0$ it holds that $s_i \stackrel{a_i}{\transgw} s_{i+1}$, and
if $s_i\in S_1$, then $a^\exists = \bot$, and if $s_i \in S_0$ then $a_i^\exists \in \Act_\exists(\obs(s_i))$ and $a_i \in \Act_\forall(\obs(s_i),a_i^\exists)$. A play $\pi$ is finite iff $\last(\pi)$ has no successor in $\mc G$, where $\last(\pi) \in S$ is the last element of $\pi$. The set $\prefs(\mc G) \subseteq S \cdot (\actebot\cdot\Sigma\cdot S)^*$ consists of the finite prefixes of plays in $\mc G$, and we denote with $\prefse(\mc G) = \{ \pi \in \prefs(\mc G) \mid \last(\pi) \in S_0\}$ the set of prefixes ending in $S_0$.

A \emph{strategy for $\playere$} is a total function $\stre : \prefse(\mc G) \to \actebot$ such that $\stre(\pi) \in \Act_\exists(\obs(\last(\pi)))$.

The \emph{outcome of a strategy $\stre$} is the set of plays $\outcome(\stre)$ such that $\pi = s_0a_0^\exists a_0s_1 a_1^\exists a_1\ldots\in \outcome(\stre)$ iff for every $i \geq 0$ with $s_i \in S_0$ it holds that $a_i^\exists = \stre(s_0a_0^\exists a_0s_1 a_1^\exists a_1\ldots s_{i})$.

We define a function $\obs^+ : \prefse(\mc G) \to (\Obs \cdot \Sigma_0)^*\cdot \Obs$ that maps a prefix in $\prefse(\mc G)$ to the sequence of state and action observations made by $\playere$:
$\obs^+(s_0a_0^\exists a_0s_1 a_1^\exists a_1\ldots s_{n}) = \obs'(s_0)\cdot\obs'(a_0)\cdot\obs'(s_1)\cdot\obs'(a_1)\ldots \cdot\obs'(s_{n})$, where
for $s \in S$, we define $\obs'(s) = \obs(s)$ if $s \in S_0$ and $\obs'(s) = \epsilon$ otherwise, and for $a \in \Sigma$ we define $\obs'(a) = a$ if $a \in \Sigma_0$ and $\obs'(a) = \epsilon$ otherwise.
 
We call a strategy $\stre$ for $\playere$ \emph{$\obs^+$-consistent} if for every pair of prefixes $\pi_1$ and $\pi_2$ in $\prefse(\mc G)$ for which $\obs^+(\pi_1)  = \obs^+(\pi_2)$ holds, it also holds that $\stre(\pi_1) = \stre(\pi_2)$.

We are interested in \emph{finite-state} strategies for $\playere$, that is, strategies that can be implemented as finite automata. A finite state $\obs^+$-consistent strategy for $\playere$ in $\mc G$ is one that can be represented as a finite automaton $\mc M_s = (Q_s,q_s^0,(Q_0\times H_\obs)\times(\actebot\times\Sigma_0),\rho)$ with alphabet $(Q_0 \times H_\obs)\times(\actebot\times\Sigma_0)$, whose transition relation $\rho \subseteq (Q_s \times ((Q_0\times H_\obs)\times (\actebot\times\Sigma_0)) \times Q_s)$ has the following properties:
\begin{itemize} \setlength{\itemsep}{-.5mm} 
 \item[$(i)$] for each $q \in Q_s$, $o \in Q_0\times H_\obs$, $a^\exists \in \actebot$, $a \in \Sigma_0$, and $q_1',q_2' \in Q_s$, it holds that if $(q,(o,(a^\exists,a)),q_1') \in \rho$ and $(q,(o,(a^\exists,a)),q_2') \in \rho$, then $q_1' = q_2'$ (i.e., the transition relation $\rho$ is deterministic),
 \item[$(ii)$] for each $q \in Q_s$ and $o \in Q_0\times H_\obs$ there exist $a^\exists \in \actebot$, $a \in \Sigma_0$, $q' \in Q_s$ with $(q,(o,(a^\exists,a)),q') \in \rho$, 
 \item[$(iii)$] if $(q,(o,(a^\exists,a_1)),q'_1) \in \rho$ and $a_2 \in \Act_\forall((0,o),a^\exists)$, then $(q,(o,(a^\exists,a_2)),q'_2) \in \rho$ for some $q_2' \in Q_s$,
 \item[$(iv)$] if $(q,(o,(a_1^\exists,a_1)),q'_1) \in \rho$ and $(q,(o,(a_2^\exists,a_2)),q'_2) \in \rho$, then $a^\exists_1 = a^\exists_2$.
\end{itemize}

The automaton $\mc M_s$ defines an $\obs^+$-consistent strategy $\stre$ for $\playere$. According to the properties of $\mc M_s$, for each $\pi \in \prefse(\mc G)$ with $\obs^+(\pi) = o_0 a_0 o_1 a_1\ldots o_{n-1}a_{n-1}o_n$ there exists a unique sequence $a_0^\exists a_1^\exists a_{n-1}^\exists\in {\actebot}^n$ such that there is a run of $\mc M_s$ (also unique) on the word $o_0 a_0^\exists a_0 o_1 a_1^\exists a_1 \ldots o_{n-1} a_{n-1}^\exists a_{n-1}$. Let $q$ be the last state of this run. We then define $\stre(\pi) = a^\exists$, where  $a^\exists \in \actebot$ is the unique label that exists by conditions $(ii)$ and $(iv)$ such that there are $a \in \Sigma_0$ and $q \in Q_s$ such that $(q,(o_n,(a^\exists,a)),q') \in \rho$.

We now turn to the definition of winning conditions in LC-games under incomplete information. We consider \emph{safety} and \emph{reachability} winning conditions for $\playere$ defined by visible sets of states in $\mc G$. A set $T \subseteq S$ is \emph{visible} iff for every $s \in T$ and every $s' \in S$ with $\obs(s') = \obs(s)$ it holds that $s' \in T$.

A \emph{safety LC-game under incomplete information} $\safety(\mc G,\err)$ is defined by a LC-game structure with incomplete information $\mc G$ and a visible set $\err$ of error states that $\playere$ must avoid. A strategy $\stre$ for $\playere$ is \emph{winning} in $\safety(\mc G,\err)$ iff 
no play in $\outcome(\stre)$ visits a state in $\err$. 

Note that according to this definition, $\playere$ wins finite plays that do not reach an error state. If we want to ensure that plays reaching a state in $\mc G$ that corresponds to a deadlock in $\mc L$ are not winning for $\playere$, we can easily achieve this by appropriately instrumenting $\mc L$ and $\err$. 

A \emph{reachability LC-game under incomplete information} $\reach(\mc G,\goal)$ is defined by a LC-game structure with incomplete information $\mc G$ and a visible set $\goal$ of goal states that $\playere$ must reach. A strategy $\stre$ for $\playere$ is \emph{winning} in $\reach(\mc G,\goal)$ iff each play in $\outcome(\stre)$ visits a state in $\goal$. 

\begin{rem}
The definition of visible sets allows that $\err \cap S_1 \not = \emptyset$ and $\goal \cap S_1 \not = \emptyset$. Thus, our definition of visible objectives does not require that for each pair of plays $\pi_1$ and $\pi_2$ with $\obs^+(\pi_1) = \obs^+(\pi_2)$ (where $\obs^+$ is defined for plays analogously to prefixes) it holds that $\playere$ wins $\pi_1$ iff he wins $\pi_2$. For the algorithms, which we present in the next section, for solving safety and reachability LC-games under incomplete information, the objective for $\playere$ does not have to satisfy this condition.
\end{rem}

\section{Algorithms for Solving Safety and Reachability Games}\label{Sec:safety_reachability}

\myparagraph{Better-Quasi Orderings}
The subword ordering $\preceq$ on $M^*$ is a WQO (and so is the ordering $\preceq$ on $W$ defined earlier). That means, it is a reflexive and transitive relation such that for every infinite sequence $w_0,w_1,\ldots$ of elements of $M^*$ there exist indices $0\leq i < j$ such that $w_i \preceq w_j$. 

The subword ordering (as well as other WQOs commonly used in verification) is in fact also a BQO, and so is the ordering on $W$. Hence they are preserved by the powerset operation. Here we omit the precise definition of BQOs since it is rather technical and it is not necessary for the presentation of our results. When needed, we recall its properties relevant for our arguments.

We extend $\preceq$ to a BQO $\preceq$ on the set $S$ of states in $\mc G$ in the following way: for $s = (p,q_0,q_1,w) \in S$ and $s' = (p',q_0',q_1',w') \in S$, we have $s \preceq s'$ iff $p = p'$, $q_0 = q_0'$, $q_1 = q_1'$, $\obs(s) = \obs(s')$ and $w \preceq w'$.

A set $T \subseteq S$ is \emph{upward-closed} (respectively \emph{downward-closed}) iff for every $s \in T$ and every $s' \in S$ with $s \preceq s'$ (respectively $s' \preceq s$) it holds that $s' \in T$. The upward-closure of a set $T \subseteq S$ is $T\uparrow = \{s' \in S \mid \exists s.\ s \in T \text{ and } s \preceq s'\}$. 
For each upward (respectively downward) closed set $T \subseteq S$ and $o \in \Obs$, the set $T' = \{s \in T \mid \obs(s) = o\}$ is also upward (respectively downward) closed. We let $\ucsets(S) = \{u \subseteq S \mid u \not = \emptyset,\ u = u\uparrow \text{ and }\exists o \in \Obs.\forall s \in u.\ \obs(s) = o\}$ and for $u \in \ucsets(S)$ we define $\obs(u)$ in the obvious way. The set $\dcsets(S)$ and $\obs:\dcsets(S)\to\Obs$ are defined analogously, requiring that the elements are downward-closed instead of upward-closed. $\dcsetsf(S)$ is the set of finite sets in $\dcsets(S)$.

The transition relation $\transgw$ enjoys the following property: if $s\stackrel{a}{\transgw}s'$ and $s \preceq s''$, then $s''\stackrel{a}{\transgw} s'$. Thus, the set of predecessors w.r.t. some $a \in \Sigma$ of any set of states is upward-closed. For LCSs the set of successors w.r.t. some $a \in \Sigma$ of any set of states is a downward-closed set.

Let $\pre : \mc P(S) \times \Sigma \to \mc P(S)$ be the function defined as $\pre(T,a) = \{s \in S \mid \exists s' \in T.\ s \stackrel{a}{\transgw} s'\}$ and let $\post : \mc P(S) \times \Sigma \to \mc P(S)$ be the function defined as $\post(T,a) = \{s \in S \mid \exists s' \in T.\ s' \stackrel{a}{\transgw} s\}.$ As recalled above, for each $T \subseteq S$ and each $a \in \Sigma$, $\pre(T,a)$ is upward-closed and $\post(T,a)$ is downward-closed.

We define the functions $\pre_0 : \ucsets(S) \times \Sigma_0  \to \pfin(\ucsets(S))$ and $\pre_1 : \ucsets(S) \to \mc \pfin(\ucsets(S))$ that map a set $u \in \ucsets(S)$ to a finite set of upward-closed sets that partition the respective set of predecessors of $u$ according to the observations $\playere$ makes. Formally, $\pre_0(u,a) = \{u' \in \ucsets(S) \mid \exists o \in \Obs.\ u' = \pre(u,a) \cap \States(o)\}$ and $\pre_1(u) = \{u' \in \ucsets(S) \mid \exists o \in \Obs.\ u' = (\bigcup_{a \in \Sigma_1}\pre(u,a)) \cap \States(o)\}$. Similarly, using the function $\post$ above, we can define the successor functions $\post_0 : \dcsets(S) \times \Sigma_0 \to \pfin(\dcsets(S))$ and $\post_1 : \dcsets(S) \to \pfin(\dcsets(S))$.
Since the transition relation of $\mc G$ has finite branching, if $d \in \dcsetsf(S)$ then $d' \in \dcsetsf(S)$ for $d' \in \post_0(d,a)$ or $d' \in \post_1(d)$.

When analyzing LCSs, upward-closed sets are typically represented by their \emph{finite sets of minimal elements}, and downward-closed sets are represented by \emph{simple regular expressions}. These representations can be extended to obtain finite representations of elements of $\ucsets(S)$ and $\dcsets(S)$. By the definition of $\preceq$ on $S$, each visible set of states is upward-closed, and hence, the sets $\err$ and $\goal$ in safety and reachability games are finitely representable. In the rest, we assume that they are represented such a way.

Our termination arguments rely on the following property: For every BQO $\preceq$ on a set $X$, the superset relation $\supseteq$ is a BQO on the set of upward-closed sets in $\mc P(X)$ and the subset relation $\subseteq$ is a BQO on the set of downward-closed sets. This implies that $\supseteq$ is a BQO on $\ucsets(S)$ and that $\subseteq$ is a BQO on $\dcsets(S)$.

\myparagraph{LC-games under incomplete information with safety objectives}
We describe a decision procedure for safety LC-games under incomplete information which is based on a backward fixpoint computation.

Each step in the fixpoint computation corresponds to a step in the game, which is not necessarily observable by $\playere$. Thus, this construction is correct w.r.t.\  $\playere$ strategies that are $\widetilde\obs$-consistent, where, intuitively, the function $\widetilde\obs$ maps a prefix to a sequence that includes also the (trivial) observations of $S_1$ states, and $\widetilde\obs$-consistency is defined analogously to $\obs^+$-consistency. To avoid this problem, our algorithm performs the fixpoint computation on a LC-game structure with incomplete information $\games$ obtained from $\mc G$ by adding an \emph{idle transition} for process $1$. This game structure has the following property: $\playere$ has an $\obs^+$-consistent winning strategy in the game $\safety(\mc G,\err)$ iff $\playere$ has an $\widetilde\obs$-consistent winning strategy in $\safety(\games,\err)$, which yields correctness of the algorithm.

Formally, the function $\widetilde\obs: \prefse(\mc G) \to (\Obs^* \cdot \Sigma_0)^*\cdot \Obs$ is defined as:
$\widetilde\obs(s_0a_0^\exists a_0\ldots s_{n}) = \obs(s_0)\cdot\obs'(a_0)\cdot\ldots \cdot\obs(s_{n})$. The game structure $\games$ is the tuple $\games = (S,I,\transgs,C,M,\Sigma_0,\widetilde{\Sigma_1},\acte,C_\obs)$ where $\widetilde{\Sigma_1} = \Sigma_1 \cup\{\idle\}$ and $\idle \not \in \Sigma$, and $\transgs =\ \transg \cup \{((1,q_0,q_1,w),\idle,(p',q_0,q_1,w)) \mid p' \in \{0,1\}\}$. 

We define the set $\lsets(S)$ for $S$ as $\lsets(S) = \{l \in \pfin(\ucsets(S))\mid l \not= \emptyset \text{ and } \exists o \in \Obs.\forall u \in l.\ \obs(u)  = o\}$ and define $\obs(l)$ for each $l \in \lsets(S)$ in the obvious way. We provide a fixpoint-based  algorithm that computes a set $B \subseteq\lsets(S)$ such that each $l \in B$ has the following property: if $K \subseteq S$ is the set of states that the game can be currently in according to $\playere$'s knowledge and $K \cap u \not = \emptyset$ for every $u \in l$, then $\playere$ cannot win when his knowledge is $K$. Considering the set $I$ of initial states, if for some $l \in B$ it holds that $I \cap u \not = \emptyset$ for all $u \in l$, then $\playere$ has no $\obs^+$-consistent winning strategy in $\safety(\mc G,\err)$.

Our procedure computes a sequence $B_0 \subseteq B_1 \subseteq B_2\ldots$ of finite subsets of $\lsets(S)$. The computation starts with the set $B_0 = \{\{Err \cap \States(o)\} \mid o\in \Obs\}$. For $i \geq 0$, we let $B_{i+1} = B_i \cup  N_{i+1}$, where the set $N_{i+1}$ of new elements is computed based on $B_i$ and is the smallest set that contains each $l \in \lsets(S)$ which is such that $l \subseteq \bigcup_{l' \in B_i,u' \in l'}((\bigcup_{a \in \Sigma_0}\pre_{0}(u',a)) \cup \pre_{1}(u'))$ and:
\begin{itemize} \setlength{\itemsep}{-.5mm} 
 \item if $l \in \mc P(\mc P (S_0))$ then for every possible choice $a^\exists \in \Act_\exists(\obs(l))$ of $\playere$, there exist an action $a \in \Act_\forall(\obs(l),a^\exists)$ and $l' \in B_i$ such that for every $u' \in l'$ it holds that $\pre_0(u',a) \cap l \not = \emptyset$,
 \item if $l \in \mc P(\mc P (S_1))$ then there exists $l' \in B_i$ such that for every $u' \in l'$ it holds that $\pre_1(u') \cap l \not = \emptyset$.
\end{itemize}

The ordering $\sqsubseteq$ on $\lsets(S)$ is defined such that for $l, l' \in \lsets(S)$, we have  $l \sqsubseteq l'$ iff for every $u \in l$ there exists a $u' \in l'$ such that $u \supseteq u'$. The ordering $\sqsubseteq$ is a BQO, since $\supseteq$ is a BQO on $\ucsets(S)$. Intuitively, if $l$ belongs to the set of elements of $\lsets(S)$ in which $\playere$ cannot win, so does every $l'$ with $l \sqsubseteq l'$.

We say that the sequence $B_0,B_1,B_2\ldots$ \emph{converges at $k$} if $\Min(B_{k+1}) \subseteq \Min(B_{k})$, where $\Min(B_{i})$ is the set of minimal  elements of $B_i$ w.r.t.\ $\sqsubseteq$. This condition can be effectively checked, since each $B_i$ is finite. We argue that there exists a $k \geq 0$ such that the sequence computed by the procedure described above converges at $k$ (and hence the procedure will terminate).

Let $F_0,F_1,F_2,\ldots$ be the sequence of upward-closed elements of $\mc P(\lsets(S))$ where $F_i = B_i\uparrow$ for each $i \geq 0$. As $F_0,F_1,F_2\ldots$ is a monotonically increasing sequence of upward-closed sets of elements of $\lsets(S)$, it must eventually stabilize, i.e., there is a $k \geq 0$ such that $F_{k+1} \subseteq F_k$. Thus, since $F_{i+1} \subseteq F_i$ if and only if $\Min(B_{i+1}) \subseteq \Min(B_{i})$, the sequence $B_0,B_1,B_2\ldots$ is guaranteed to converge at some $k \geq 0$.

\begin{proposition}
Let $B = B_{k}$, where the sequence $B_0,B_1,B_2\ldots$ converges at $k$. Then, $\playere$ has an $\widetilde\obs$-consistent winning strategy in $\safety(\games,\err)$ iff for every $l \in B$ there exists $u \in l$ with $u \cap I = \emptyset$. 

If $\playere$ has an $\widetilde\obs$-consistent winning strategy in $\safety(\games,\err)$, then $\playere$ has a finite-state $\obs^+$-consistent winning strategy in the original game $\safety(\mc G,\err)$.
\end{proposition}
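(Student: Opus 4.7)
The plan is to prove the iff characterisation in $\games$ first, and then extract a finite-state $\obs^+$-consistent winning strategy in $\mc G$ from the ``if'' direction. Throughout, $\playere$'s knowledge $K \subseteq S$ along any play is downward-closed (both $\post$ and intersection with $\States(o)$ preserve this), and I say $K$ \emph{matches} $l \in \lsets(S)$ when $K \cap u \neq \emptyset$ for every $u \in l$; the proposition's condition is then exactly that the initial knowledge sets $\{s \in I : \obs(s) = o\}$ match no $l \in B$.

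For soundness of $B$ I induct on the stage $i$ at which $l$ enters the sequence: if downward-closed $K$ matches $l \in B_i$ at observation $\obs(l)$, then $\playere$ loses $\safety(\games,\err)$ from $K$. The base case $l = \{\err\cap\States(o)\} \in B_0$ is immediate, since matching places $K$ already in $\err$. For $l \in N_{i+1} \cap \mc P(\mc P(S_0))$ the defining clause gives, for each $a^\exists \in \Act_\exists(\obs(l))$, a response $a \in \Act_\forall(\obs(l),a^\exists)$ and $l' \in B_i$ with $\pre_0(u',a) \cap l \neq \emptyset$ for every $u' \in l'$; unpacking $\pre_0$, for each $u' \in l'$ there is $u \in l$ equal to $\pre(u',a) \cap \States(\obs(l))$, so a witness in $K \cap u$ has an $a$-successor in $u'$, making $K' = \post(K,a) \cap \States(\obs(l'))$ match $l'$, and the induction hypothesis supplies $\playerf$'s losing continuation. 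The $S_1$-case is analogous via $\pre_1$, and the $\idle$-transition in $\games$ is what justifies treating each predecessor step as a legitimate play step regardless of whose turn it is.

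For completeness I argue contrapositively, using convergence crucially. Given downward-closed $K$ matching no $l \in B$, the claim is that at observation $\obs(K)$ some move $a^\exists$ guarantees that for every legal response $a$ the successor knowledge $K'$ still matches no element of $B$; iterating this defines a winning strategy, which applied to the initial knowledge yields the ``if''. If no such $a^\exists$ existed, then for each $a^\exists$ pick a failing response $a$ and $l' \in B$ with $K'$ matching $l'$; gathering the corresponding observation classes of predecessors produces $l^* \in \lsets(S)$ that satisfies the defining clauses of $N_{k+1}$, and $K$ matches $l^*$ by construction. But then no $l'' \in \Min(B_k)$ can satisfy $l'' \sqsubseteq l^*$ --- else $K$ would match $l''$ --- so any $\sqsubseteq$-minimal $l^\dagger \in B_{k+1}$ below $l^*$ lies in $\Min(B_{k+1}) \setminus \Min(B_k)$, contradicting convergence. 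The hard part here is ensuring that the assembled $l^*$ is really an element of $\lsets(S)$ at observation $\obs(K)$, for which the BQO of $\supseteq$ on $\ucsets(S)$ (and hence of $\sqsubseteq$ on $\lsets(S)$) is decisive.

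For the transfer to $\mc G$, the ``if''-direction argument yields a positional winning strategy defined on downward-closed knowledge sets. To make it finite-state I exploit that the winning region --- downward-closed $K$ matching no $l \in B$ --- is itself downward-closed under $\subseteq$, so the BQO of $\subseteq$ on $\dcsets(S)$ gives a finite antichain of $\subseteq$-minimal losing sets, and the strategy automaton $\mc M_s$ can be built over the finite abstraction that records, for each reachable $K$, its relationship to these critical patterns (equivalently, the finite set $\{l \in B : K \text{ matches } l\}$ together with the finitely many upward-closed sets from $B$ intersecting $K$). The $\idle$-transitions of $\games$ are absorbed by the $\obs^+$-projection onto $\mc G$, because they affect only the turn component of the state, and conditions $(i)$--$(iv)$ on $\mc M_s$ follow from the fact that the chosen move is uniform across members of a class. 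I expect this finite-state encoding to be the main obstacle: one must ensure that the successor's class is determined by the current class, the observation, and the chosen action, which requires a careful interplay between the monotonicity of $\post$ w.r.t.\ $\subseteq$ on $\dcsets(S)$, the BQO structure, and the finiteness of $B$.
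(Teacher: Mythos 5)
Your treatment of the fixpoint characterisation is sound and, in substance, matches what the paper intends, though it is organised differently: where the paper argues via finite \emph{counterexample trees} labelled with elements of $\dcsets(S)$ (a top-down construction of such a tree when some $l\in B$ is matched by $I$, and an induction on tree depth for the converse), you prove soundness by induction on the stage at which $l$ enters $B_i$, and completeness by the contrapositive convergence argument --- assembling $l^*$ from the observation classes $\pre(u',a)\cap\States(\obs(K))$ and deriving a contradiction with $\Min(B_{k+1})\subseteq\Min(B_k)$. Both routes are correct; yours makes the role of convergence explicit, while the paper's tree formulation packages the same induction into a well-founded witness object. Your construction of $l^*$ is the right one, and the verification that it lies in $\lsets(S)$ at observation $\obs(K)$ and satisfies the defining clause of $N_{k+1}$ goes through (the nonemptiness and upward-closedness of each $\pre(u',a)\cap\States(\obs(K))$ are exactly what is needed).

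The gap is in the finite-state strategy extraction. You propose to quotient knowledge sets $K$ by the signature recording which upward-closed sets $u\in\bigcup_{l\in B} l$ meet $K$ (together with the set of matched $l$'s), and to build $\mc M_s$ over these classes. But this abstraction is not a congruence for the knowledge update: $\post(K,a)\cap\States(o')$ depends on $K$ itself, not merely on which $u$'s it intersects, so two knowledge sets with the same signature can have successors with different signatures, and the transition relation of your automaton is not well defined from the data you record --- a problem you yourself flag but do not resolve. The paper's sketch uses a different device: the automaton states are functions from observations into a finite family $\mc V\subseteq\pfin(\mc P(S))$ of collections $V$ that \emph{over-approximate} the true knowledge while preserving the invariant that for every $l\in B$ some $u\in l$ is disjoint from $\bigcup_{v\in V}v$. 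Over-approximation is safe for a safety objective, since enlarging the knowledge only strengthens the invariant to be maintained; the real work is then to exhibit such a finite family closed under the update, which is where the finite representations of upward- and downward-closed sets and the BQO structure enter. Your finite antichain of $\subseteq$-minimal losing knowledge sets is a plausible ingredient, but as written it does not yield a well-defined successor map, so this part of your argument needs to be replaced or substantially repaired.
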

\begin{proof}[Proof Idea]
A counterexample tree for $\safety(\games,\err)$ represents a witness for the fact that $\playere$ does not have an $\widetilde \obs$-consistent winning strategy in $\safety(\games,\err)$. It is a finite tree with nodes labeled with elements of $\dcsets(S)$. If there is a $l \in B$ such that $u \cap I \not= \emptyset$ for every $u \in l$, a counterexample tree can be constructed in a top-down manner. For the other direction we can show by induction on the depth of the existing counterexample trees that there exists a $l \in B$ such that $u \cap I \not= \emptyset$ for every $u \in l$.

For the case when $\playere$ wins the game $\safety(\games,\err)$ we can construct a finite-state $\obs^+$-consistent winning strategy for $\playere$ in the game $\safety(\mc G,\err)$ by using as states for the strategy automaton functions from observations to a finite set $\mc V \subseteq \pfin(\mc P (S))$ each of whose elements $V$ preserves the invariant that for every $l \in B$ there exists a $u \in l$ such that $u \cap \bigcup_{v \in V} v = \emptyset$. 
\end{proof}

\myparagraph{LC-Games under incomplete information with reachability objectives} 
For reachability games we give a procedure based on forward exploration of the sets of states representing the knowledge of $\playere$ about the current state of the game. Since $\playere$ can only observe the heads the observable channels, his knowledge at each point of the play is a finite downward-closed set, element of $\dcsetsf(S)$. To update this knowledge we define functions $\post_0^\obs: \dcsetsf(S) \times \Sigma_0 \to \pfin(\dcsetsf(S))$ and $\post_1^\obs: \dcsetsf(S) \to \pfin(\dcsetsf(S))$ that map a set $d \in \dcsetsf(S)$ to a finite set of elements of $\dcsetsf(S)$, each of which is a set of states that $\playere$ knows, according to his current observation, the game may be in after (a transition from $\Sigma_0$ and) a sequence of transitions from $\Sigma_1$. For each $d \in \dcsetsf(S)$ we have $d' \in \post_0^\obs(d,a)$ (respectively $d' \in \post_1^\obs(d)$) iff there exists a sequence $d_0,d_1,\ldots,d_n \in\dcsetsf(S)$ such that 
$d_0 \in  \post_0(d,a)$ (respectively $d_0 = d$), for every $1 \leq i \leq n$ it holds that $d_{i-1} \subseteq S_1$ and $d_i \in \post_1(d_{i-1})$, and 
for every $0\leq i < j < n$ it holds that $d_i \not \subseteq d_j$
and one of the following conditions is satisfied: 
(1) $d' \subseteq \goal$, $d' = d_0$ and $n = 0$ (i.e., $d' \subseteq \goal \cap S_1$), or
(2) there exists a $1 \leq i < n$ such that $d_i \subseteq d_n$ and $d' = d_n$ (i.e., $d' \subseteq S_1$), or
(3) $d' = \{(0,q_0',q_1',w') \mid (1,q_0',q_1', w') \in \bigcup_{i = 0}^{n} d_i\}$ (i.e., $d' \subseteq S_0$).

We construct a finite set of trees rooted at the different possible knowledge sets for $\playere$ at location $q_o^0$.
The nodes of the trees are labeled with knowledge sets, i.e., with elements of $\dcsetsf(S)$. The edges are labeled wit pairs of transition labels, i.e., elements of $\actebot\times\Sigma_0$, where the first element of a pair is a possible choice of $\playere$ and the second element is a corresponding choice of $\playerf$. 

Formally, the forward exploration procedure constructs a forest $\mc T$ in which the roots are labeled with the sets $\{(0,q_0^0,q_1^0,\epsilon)\}$ and all the sets $d\in \post_1^\obs(\{(1,q_0^0,q_1^0,\epsilon)\}\setminus \goal)$. At each step of the construction an open leaf node $n$ with label $d$ is processed in the following way:
\begin{itemize} \setlength{\itemsep}{-.5mm}
 \item If $d \subseteq \goal$, we close the node and do not expand further from this node.
 \item If $d \not\subseteq \goal$ and either $d \subseteq S_0$ and there exists an ancestor of $n$ that is labeled with $d'$ and such that $d' \subseteq d$, or $d \subseteq S_1$, we close the node and do not expand further from this node.
\item Otherwise, we add the set of successors of $n$: for each $a^\exists \in \Act_\exists(\obs(d))$, each $a \in \Act_\forall(\obs(d),a^\exists)$ and each 
$d'\in\post_0^\obs(d,a)$ we add exactly one successor $n'$ labeled with $d'$ and label the edge $(n,n')$ with $(a^\exists,a)$. The set of successors for $(a^\exists,a)$ is denoted with $\Children(n,a^\exists,a)$. 
\end{itemize}

The finite branching of the transition relation of $\mc G$ and the fact that $\subseteq$ is a BQO on $\dcsetsf(S)$ imply that each of the sets $\post_0^\obs(d,a)$ and $\post_1^\obs(d)$ can be effectively computed, the set of roots and the out-degree of each node are finite, and the above procedure terminates constructing a finite forest $\mc T$. 

We label each node $n$ in $\mc T$ with a boolean value $\win(n)$. For a leaf node $n$ with $d(n) \subseteq \goal$, we define $\win(n) = \true$ and for any other other leaf node $n$ we define $\win(n) = \false$. The value of a non-leaf node is computed based on those of its children by interpreting the choices of $\playere$ disjunctively and the choices of $\playerf$ conjunctively. Formally, for every non-leaf node $n$ we define $\win(n) = \bigvee_{a^\exists \in \Act_\exists(\obs(d(n)))}\bigwedge_{a \in \Act_\forall(\obs(d(n)),a^\exists)}\bigwedge_{n' \in \Children(n,a^\exists,a)} \win(n')$, where $d(n)$ is the set of states labeling $n$.

\smallskip

\begin{proposition}
$\playere$ has an $\obs^+$-consistent winning strategy in $\reach(\mc G,\goal)$ iff for every root $n$ in $\mc T$ it holds that $\win(n) = \true$.
If $\playere$ has an $\obs^+$-consistent winning strategy in $\reach(\mc G,\goal)$, then he also has a finite state $\obs^+$-consistent winning strategy in $\reach(\mc G,\goal)$. 
\end{proposition}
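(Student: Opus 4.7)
The plan is to prove the biconditional via a tight correspondence between the nodes of $\mc T$ and the knowledge sets reachable under any $\obs^+$-consistent play, and to read the finite-state winning strategy for $\playere$ directly off of $\mc T$ in the ``if'' direction. Throughout I rely on monotonicity of $\transgw$ with respect to $\preceq$ and on the BQO property of $\subseteq$ on $\dcsetsf(S)$, which together guarantee that $\mc T$ is a finite object and legitimize the cycle-closing step in its construction.

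The core technical lemma is a knowledge-tracking invariant: for any $\obs^+$-consistent strategy $\stre$ and any prefix $\pi \in \prefse(\mc G) \cap \outcome(\stre)$, the set $K(\pi) = \{\last(\pi') \mid \pi' \in \prefse(\mc G) \cap \outcome(\stre),\ \obs^+(\pi') = \obs^+(\pi)\}$ of states that $\playere$ considers possible coincides with the label $d(n)$ of the unique node of $\mc T$ reached by descending from the appropriate root along edges that match the sequence of $\actebot\times\Sigma_0$ pairs recorded in $\obs^+(\pi)$, with the convention that memory is redirected to the ancestor witness whenever a cycle-closed leaf is encountered. I would prove this by induction on the number of $\Sigma_0$-actions in $\pi$: the base case is settled by the form of the roots of $\mc T$, and the induction step unfolds $\post_0^\obs$ and $\post_1^\obs$, which by construction enumerate all knowledge sets reachable after a single $\Sigma_0$-transition followed by a maximal sequence of $\Sigma_1$-transitions, partitioned by observation.

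For the direction ``every root $\win(n) = \true$ implies $\playere$ wins'', I would read off a finite-state $\obs^+$-consistent winning strategy whose memory is a node of $\mc T$. At memory $n$ with $\win(n) = \true$, $\playere$ plays the action $a^\exists$ witnessing the outermost disjunction of $\win(n)$ and advances the memory to the unique child $n' \in \Children(n,a^\exists,a)$ whose label matches $\playere$'s subsequent observation, resetting memory to the ancestor witness whenever a cycle-closed leaf is reached. Soundness of the reset step follows from monotonicity: since such a leaf's knowledge $d$ satisfies $d \supseteq d'$ for the ancestor's label $d'$, every play continuable from $d$ is mirrored by one continuable from $d'$, so the ancestor's winning choice remains winning at the leaf. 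The conjunctive branching over $\Act_\forall$ and $\Children(n,a^\exists,a)$ in the definition of $\win$ then guarantees that every outcome eventually reaches a $\goal$-labeled leaf, and finiteness of $\mc T$ delivers the finite-state claim.

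For the converse, suppose $\stre$ is an $\obs^+$-consistent winning strategy for $\playere$ yet some root $n_0$ has $\win(n_0) = \false$. I would inductively construct a play in $\outcome(\stre)$ that avoids $\goal$, contradicting the winning property of $\stre$. At a $\false$-labeled non-leaf $n$ where $\stre$ prescribes $a^\exists$, the definition of $\win$ supplies an answer $a \in \Act_\forall(\obs(d(n)),a^\exists)$ and a $\false$-labeled child $n' \in \Children(n,a^\exists,a)$, which by the knowledge invariant corresponds to a genuine extension of the play in $\mc G$. The main obstacle is the cycle-closed leaves, where the forest offers no continuation yet we need an infinite play in order to exhibit failure of the reachability objective: at such a leaf with knowledge $d \supseteq d'$ for some ancestor's label $d'$, I would use monotonicity to show that the entire $\mc T$-subpath descending from that ancestor can be replayed verbatim from the leaf (every state in $d'$ has a $\preceq$-dominating counterpart in $d$, so every transition available from $d'$ remains available from $d$); iterating this replay yields an infinite $\goal$-avoiding play in $\outcome(\stre)$, yielding the required contradiction.
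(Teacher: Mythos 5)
Your overall architecture matches the paper's: read a finite-state strategy off the $\true$-labelled part of $\mc T$ in one direction, and in the other direction use a $\false$-labelled root to build a $\goal$-avoiding play against an arbitrary $\obs^+$-consistent $\stre$; the knowledge-tracking invariant you interpose is a reasonable way to make the correspondence precise. However, two steps as you state them do not go through. First, your ``reset to the ancestor witness'' clause in the winning strategy is justified by a monotonicity claim that points the wrong way for reachability: if a cycle-closed leaf has knowledge $d \supseteq d'$ for the ancestor's label $d'$, then the plays continuable from $d'$ are a \emph{subset} of those continuable from $d$, so the ancestor being winning would tell you nothing about the leaf --- larger knowledge is worse, not better, when $\playere$ must force \emph{all} plays into $\goal$. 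Indeed the construction deliberately assigns $\win(n)=\false$ to every cycle-closed leaf precisely because returning to a non-shrinking knowledge set without having met $\goal$ lets $\playerf$ loop forever. The saving grace is that your reset clause is dead code: from a node with $\win(n)=\true$, the witnessing $a^\exists$ leads only to $\true$-labelled children, so no cycle-closed leaf is ever reached; but the argument you give for the clause is false as stated and should be removed rather than ``proved.''

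Second, in the converse direction the phrase ``the $\mc T$-subpath descending from that ancestor can be replayed verbatim'' hides the real difficulty: $\stre$ is $\obs^+$-consistent but not positional, and after traversing the cycle once the observation history is strictly longer, so on the second pass $\stre$ may prescribe different labels and the original subpath need not be followed again. The repair is to use that $\mc T$ enumerates \emph{all} of $\playere$'s admissible choices at every node (the children range over all $a^\exists \in \Act_\exists(\obs(d))$), together with the invariant that the actual knowledge set under $\stre$ always contains the tree label of the current node; then $\win(n)=\false$ guarantees that whatever $\stre$ plays at the longer history, $\playerf$ has a response leading to a $\false$ child, and the descent (with jumps back at cycle-closed leaves and with the $S_1$-cycle leaves handled by iterating the detected $d_i \subseteq d_n$ inclusion inside $\post_1^\obs$) can be continued forever. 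Finally, you still need to extract a single infinite play from the resulting infinite sequence of nonempty knowledge sets; this requires an appeal to the finite branching of $\transgw$ (K\"onig's lemma), which your sketch omits.
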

\begin{proof}[Proof Idea]
If all the roots are labeled with $\true$ we can construct a finite-state $\obs^+$-consistent strategy winning for $\playere$ in $\reach(\mc G,\goal)$, by mapping each prefix in $\prefse(\mc G)$ to a label in $\actebot$, determined by a corresponding path in $\mc T$ and a fixed successful choice at its last node, if such path and choice exist, or given an appropriate default value otherwise.
For the other direction we suppose that some root is labeled with $\false$ and show that for any $\obs^+$-consistent strategy $\stre$ for $\playere$, we can use the tree to construct a play $\pi \in \outcome(\stre)$ that never visits a state in $\goal$.
\end{proof}

\myparagraph{LC-games under incomplete information with parity objectives} We now turn to more general $\omega$-regular visible objectives for $\playere$ where the undecidability results established in~\cite{AbdullaBd/08/MonotonicAndDownwardClosedGames} for perfect information lossy channel games in which only one player can lose messages, carry on to our setting.

A \emph{visible priority function} $\pr : \Obs \to \{0,1,\ldots,n\}$ for natural number $n \in \mathbb N$ maps each observation to a non-negative integer priority. For an infinite play $\pi = s_0 a_0^\exists a_0 s_1 a_1^\exists a_1\ldots$ we define $\pr(\pi) = \min \{\pr(o)\mid o \in \InfObs(\pi)\}$, where $\InfObs(\pi)$ is the set of observations that occur infinitely often in $\pi$, and define $\wpr(\pi) = \min \{\pr(\obs(s_0)),\pr(\obs(s_1)),\ldots\}$.
A \emph{parity} (respectively \emph{weak parity}) LC-game under incomplete information $\parity(\mc G,\pr)$ (respectively $\wparity(\mc G,\pr)$) is defined by a LC-game structure with incomplete information $\mc G$ and a visible priority function $\pr$. 
A strategy $\stre$ for $\playere$ is \emph{winning} in the parity game $\parity(\mc G,\pr)$ (weak parity game $\wparity(\mc G,\pr)$) iff for every infinite play $\pi \in \outcome(\stre)$ it holds that $\pr(\pi)$ is even (respectively $\wpr(\pi)$ is even).

\smallskip

\begin{proposition}
The weak parity game solving problem for LC-games under incomplete information, that is, given a weak parity LC-game under incomplete information $\wparity(\mc G,\pr)$ to determine whether there exists an $\obs^+$-consistent winning strategy for $\playere$ in $\wparity(\mc G,\pr)$, is undecidable.
\end{proposition}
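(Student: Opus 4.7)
The plan is to reduce from the undecidable (weak) parity game problem for perfect-information lossy channel games with a single lossy player, as established in~\cite{AbdullaBd/08/MonotonicAndDownwardClosedGames}. Given an instance $\mc G^{PI}$ of that problem in which $\playerf$ is the lossy party, I will build an LC-game structure with incomplete information $\mc G$ together with a visible priority function $\pr : \Obs \to \{0,\ldots,n\}$ such that $\playere$ wins $\mc G^{PI}$ iff $\playere$ has an $\obs^+$-consistent winning strategy in $\wparity(\mc G,\pr)$.

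The construction absorbs all finite-state information of $\mc G^{PI}$ (the pair of control locations and the phase of the current round) into the location component $q_0$ of $\mc A_0$, while $\mc A_1$ serves only as a vehicle through which lossy updates may be enacted. All channels are declared observable by setting $C_\obs = C$, so the head of each channel is visible to $\playere$ whenever $p = 0$. The priority function depends only on $q_0$, which is always part of the observation, so visibility of $\pr$ is immediate. To reconcile the ``only $\playerf$ is lossy'' assumption of $\mc G^{PI}$ with our framework (where $\transgw$ permits losses around every transition), $\playere$'s transitions in $\mc A_0$ are constrained to perform no channel operations; losses that would occur during his turn are then indistinguishable from losses during the surrounding $\mc A_1$-transitions, which $\playerf$ already controls.

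With this setup, $\playere$'s observation in $\mc G$ carries exactly the information available to him in $\mc G^{PI}$, and perfect-information strategies from $\mc G^{PI}$ lift to $\obs^+$-consistent strategies in $\mc G$ and vice versa. The undecidability of the source problem then transfers directly. The principal obstacle I anticipate is handling the specific form of the construction in~\cite{AbdullaBd/08/MonotonicAndDownwardClosedGames}: if $\playere$'s winning strategy there requires inspection of content that lies deep inside a channel rather than only at the head, the direct embedding above breaks. To overcome this, one can insert auxiliary $\playerf$-rounds in $\mc A_1$ that nondeterministically pop and copy a guess of such information into an observable register in $q_0$, coupled with a consistency check executed by $\playerf$ before $\playere$ is called to move, so that a cheating guess causes $\playerf$ to lose. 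I expect that in the published construction, $\playere$'s decisions in fact depend only on the current control location and on the head of a single channel, which makes the straightforward embedding sufficient without this additional gadget.
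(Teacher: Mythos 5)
Your reduction is essentially the paper's: both transfer the undecidability of the weak parity problem for perfect-information B-LCS games from Abdulla et al.\ to this setting, make the priority function depend only on the observable control location so that it is visible, and observe that the additional message losses permitted around $\playere$'s moves are absorbed into the losses already controlled by $\playerf$. The residual worry you flag---whether $\playere$ might need to inspect channel content below the head---is discharged in the paper by a stronger observation than the one you merely ``expect'': in the source construction player A follows entirely passively while player B simulates the lossy channel system, so $\playere$ needs no information whatsoever and your auxiliary guessing gadget (which would in any case be delicate, since an adversarial $\playerf$ cannot be relied upon to reveal channel contents truthfully) is unnecessary.
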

\begin{proof}[Proof Idea]
In~\cite{AbdullaBd/08/MonotonicAndDownwardClosedGames} it was shown that in the perfect information setting the weak parity problem for B-LCS games, which are games played on a finite set of channels in which player A has a weak parity objective and only player B is allowed to lose messages, is undecidable. Their proof (given for A-LCS games but easily transferable into a proof for B-LCS games) is based on a reduction from the infinite computation problem for transition systems based on lossy channel systems, which is undecidable~\cite{AbdullaJ/96/UndecidableVerificationProblemsForProgramsWithUnreliableChannels}.

We argue that this reduction can be adapted for our framework, with $\playere$ in the role of player A and $\playerf$ in the role of player B. The fact that here $\playere$ choses only transition labels and plays under incomplete information does not affect the proof for B-LCS games, since there player A just follows passively, while player B simulates the original system. The values of the priority function used in~\cite{AbdullaBd/08/MonotonicAndDownwardClosedGames} do not depend on the contents of the channels. Thus, we can define a visible priority function.
\end{proof}

As a consequence, the parity game solving problem for LC-games under incomplete information is undecidable as well. 
As noted in~\cite{AbdullaBd/08/MonotonicAndDownwardClosedGames}, the construction from the proposition above can be used to show undecidability of A-LCS and B-LCS games with B\"uchi and co-B\"uchi objectives. 

\bigskip

\myparagraph{Summary of the results} The results of the paper are summarized in the following theorem.
\begin{theorem}
For lossy channel game structures with incomplete information
\begin{itemize}\setlength{\itemsep}{-.5mm} 
 \item games with visible safety or reachability objectives for $\playere$ are decidable, and when $\playere$ has an observation-based winning strategy, a finite-state such strategy can be effectively computed, 
 \item games with visible weak parity objectives for $\playere$ are undecidable.
\end{itemize}
\end{theorem}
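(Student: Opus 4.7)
The plan is to assemble the theorem from the three propositions established earlier in the paper, each of which already handles exactly one item in the summary. The theorem statement itself is a collation, so the task is to verify that the propositions indeed yield an effective decision procedure together with effective strategy synthesis on the positive side, and an undecidability reduction on the negative side.

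For the safety case, I would invoke the first proposition, which reduces the existence of an $\obs^+$-consistent winning strategy for $\playere$ in $\safety(\mc G,\err)$ to checking, for the set $B = B_k$ at which the sequence $B_0 \subseteq B_1 \subseteq \cdots$ converges, whether every $l \in B$ contains some $u$ with $u \cap I = \emptyset$. Effectivity then rests on two points: (i) termination of the fixpoint, which holds because $\sqsubseteq$ is a BQO on $\lsets(S)$ (itself a consequence of $\supseteq$ being a BQO on $\ucsets(S)$, by preservation of BQOs under the powerset operation), so the monotonically increasing sequence $F_i = B_i\uparrow$ must stabilize; and (ii) decidability of each individual step, which follows from the finite representability of upward-closed sets by their finitely many minimal elements and finite computability of $\pre_0$ and $\pre_1$. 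The finite-state strategy is then obtained directly from the automaton construction in the proposition's proof idea, using functions from observations to a finite set $\mc V \subseteq \pfin(\mc P(S))$ as automaton states.

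For the reachability case, I would apply the second proposition in the same fashion. Here the forest $\mc T$ is guaranteed to be finite because $\subseteq$ is a BQO on $\dcsetsf(S)$, the out-degree of every node is finite (finite branching of $\transgw$ combined with finiteness of $\Act_\exists(\obs(d))$ and $\Act_\forall(\obs(d),a^\exists)$), and each branch terminates by the subsumption criterion or by reaching a goal or an $S_1$-knowledge set. Consequently $\post_0^\obs$ and $\post_1^\obs$ are effectively computable, $\mc T$ can be constructed in finite time, and the boolean labels $\win(n)$ can be back-propagated. Checking whether all roots carry $\true$ then decides the game, and the strategy automaton is read off the tree as in the proof idea.

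For the weak parity case, undecidability is inherited from the third proposition, whose reduction adapts the B-LCS construction of Abdulla et al.\ so that $\playere$ plays the passive role of player A and $\playerf$ simulates the lossy channel computation as player B; the visible priority function is well-defined because the priorities used in that reduction do not depend on channel contents. The main obstacle in packaging the theorem is the bookkeeping around the two observation functions: the safety algorithm operates on the auxiliary structure $\games$ with respect to $\widetilde\obs$-consistency, and one must appeal to the already-stated equivalence between having an $\obs^+$-consistent winning strategy in $\safety(\mc G,\err)$ and having an $\widetilde\obs$-consistent winning strategy in $\safety(\games,\err)$ to transfer the result back to the original game. Beyond this, the theorem is a straightforward aggregation.
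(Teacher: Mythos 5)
Your proposal is correct and matches the paper's approach exactly: the theorem is stated as a summary whose proof consists of combining the three preceding propositions, with effectivity on the positive side resting on the BQO-based termination and finite-representability arguments, and the negative side inherited from the B-LCS reduction. Your handling of the $\obs^+$/$\widetilde\obs$ transfer via the auxiliary structure $\games$ is also the route the paper takes.
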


\section{Conclusion}\label{Sec:conclusion}
We showed that the game solving problem for LC-games under incomplete information with safety or reachability objective for $\playere$ is decidable. LC-games under incomplete information with more general winning conditions, such as weak parity (as well as B\"uchi and co-B\"uchi) condition can easily be shown to be undecidable, using a reduction similar to the one described in~\cite{AbdullaBd/08/MonotonicAndDownwardClosedGames} for A-LCS games (which are perfect information games defined on LCSs in which only one player can lose channel messages).
An orthogonal extension that is also clearly undecidable is decentralized control. This implies that suitable abstraction techniques are needed to address the synthesis problem within these undecidable settings.   

\bigskip
\noindent
{\bf Acknowledgements} This work is partially supported by the DFG as part of SFB/TR 14 AVACS.

\bibliographystyle{eptcs}
\bibliography{main}

\end{document}